\newtheorem{teorema}{Theorem}[section]
\newtheorem{proposicion}[teorema]{Proposition}
\newtheorem{comentario}[teorema]{Remark}
\numberwithin{equation}{section}
\begin{document}
\begin{title}[Emergent classical gravity from emergent quantum mechanics]
 {On the emergence of gravity in a framework of emergent quantum mechanics}
\end{title}
\date{\today}
\maketitle
\begin{center}
\author{Ricardo Gallego Torrom\'e\footnote{email: rigato39@gmail.com}}
\end{center}
\begin{center}
\address{Department of Mathematics\\
Faculty of Mathematics, Natural Sciences and Information Technologies\\
University of Primorska, Koper, Slovenia}
\end{center}

\begin{abstract}
In this work we consider the emergent origin of gravity in the framework of Hamilton-Randers theory, a theoretical framework for emergent quantum mechanics. After presenting the essential ingredients of the theory, a derivation of the weak equivalence principle from the first principles of the theory follows. Then it is shown that the Newtonian model of gravity and certain modified gravity models are consistent with Hamilton-Randers theory. One of these models has strong formal resemblance with MOND, providing further ground to think that Hamilton-Randers theory could be a natural theoretical frame for realistic modified gravity models.
\end{abstract}

{\bf Keywords}: Emergent Quantum Mechanics; Emergent Gravity; Concentration of Measure; Weak Equivalence Principle; Modified Gravity.

\section{Introduction}

In relativistic theories, spacetime is described as a dynamical, geometric object characterized by being affected and affecting the dynamics of matter and fields living over it. This is in sharp contrast with the usual description of physical phenomena offered by quantum mechanics, where the back-ground spacetime structure is fixed. Being general relativity and quantum mechanics theories aimed to be of universal validity and due to the differences in the way they describe physical processes, one should expect to find predictions associated to general relativity in direct conflict with predictions of quantum mechanical models.

Such conceptual and technical confrontation between quantum mechanics and general relativity must be superseded by a new conceptually and mathematically consistent unified theory. It is usually stated that such a theory must be a quantum mechanical theory of the gravitational interaction. There are several research programs aimed to solve the incompatibility problems between general relativity and quantum theory. String theory and loop quantum gravity are notable frameworks. Although the expectation on these programs are legitime and well grounded, the problem of finding a consistent theory merging quantum mechanics and general relativity or suitable modifications of them is still elusive and remains open. Therefore, different avenues for the analysis of these issues can be of certain interest.

A new perspective on the foundational issues of the quantum theory are suggested under the common name of {\it emergent quantum mechanics}.
These programs share the point of view that there is an underlying more radical level of physical description from where quantum mechanics is obtained as an effective description \cite{Adler, Blasone2, Elze, Ricardo2005,Ricardo2006,Ricardo2014, Groessing2013, Hooft, Hooft 2001, Hooft2006,Hooft2012wavefunctionschroedingercollapse, Hooft2016, Smolin2012}.

For several frameworks of emergent quantum mechanics the degrees of freedom at the fundamental scale are deterministic and local \cite{Hooft2016,Ricardo2014}. One fundamental difficulty in such approaches is that the associated Hamiltonian operators, being linear in the momentum operators, are not bounded from below.
Therefore, in order to ensure the existence of stable minimal energy states, a natural requirement for the construction of viable quantum models of matter, a mechanism to stabilize the vacuum is necessary. One of the mechanism proposed in the literature involves a dissipative dynamics at the fundamental Planck scale \cite{Blasone2, Elze, Hooft}. It was suggested that the gravitational interaction plays an essential role as the origin of the information loss dynamics and must be present at the level of the fundamental  scale. However, gravitational interaction could be also a classical and emergent phenomenon, absent at the fundamental scale  where it is assumed that the dynamics of the microscopic fundamental degrees of freedom takes place. If this is the case, it is not natural to appeal from the beginning to gravity as the origin of the dissipation of information at the fundamental level of physical description.

The research line on emergent quantum mechanics developed by the author turns around the conjecture that the quantum description of physical systems are emergent from a deeper level description of Nature. At such deeper level, the fundamental dynamics is compatible with locality in the configuration space of Hamilton-Randers dynamical systems and it is also compatible with determinism \cite{Ricardo2005,Ricardo2006,Ricardo2014}, although such a dynamics is highly non-trivial in the sense of complexity. Emergent quantum mechanics as envisaged in the formulation of the author, that due to the dynamical structure it was named {\it Hamilton-Randers theory}, postulates a set of formal properties for the mathematical objects and dynamics describing the fundamental degrees of freedom evolution. The theory develops both, a geometric formulation of the dynamics that help to keep track of general covariance and the Koopman-von Neumann formulation of the dynamics \cite{Koopman1931,von Neumann}, that is instrumental to show how the quantum mechanical description of the theory emerges as a coarse grained average description of the fundamental dynamics \cite{Ricardo2014}.

One of the consequences of the theory is the realization of the emergent origin of gravity. The argument is based upon the formal derivation of the weak equivalence principle from the assumptions of emergent quantum mechanics \cite{Ricardo2014,Ricardo2019a} and the consistency of the fundamental interaction in certain domain with the fundamental properties partially characterizing gravitational models as the existence of a causal structure, diffeomorphism invariance and general covariance \cite{Ricardo2014}. In particular, the assumptions of Hamilton-Randers theory directly involved in the derivation of the weak equivalence principle concern the applicability of the mathematical theory of concentration of measure in mm-Gromov spaces \cite{Gromov,MilmanSchechtman2001,Berger 2002}. Concentration of measure that we use as a guiding principle is the property that in certain large dimensional spaces endowed with a measure and topological metric, $1$-Lipschitz functions or operators are almost constant almost everywhere. Concentration of measure appears in multiple faces of mathematics, as in metric geometry \cite{Gromov,Berger 2002}, functional analysis \cite{MilmanSchechtman2001} or in probability theory \cite{Talagrand}. In mathematical physics, concentration of measure is pivotal in the geometric-like derivation of Maxwell-Boltzmann distribution \cite{Gromov}, for instance. The implications of concentration of measure to emergent quantum mechanics rely on its fundamental role in a natural spontaneous quantum state collapse mechanism that appears in the theory and, in a very closely related way, for the interpretation that the fundamental interaction during the natural spontaneous collapse processes is the gravitational interaction (see \cite{Ricardo2014} chapters 6 and 7 and also ref. \cite{Ricardo2019a}).

In order to illustrate these concepts we provide in this paper a short introduction to Hamilton-Randers theory explaining the appearance of the $1$-Lipschitz dynamics condition and the use of concentration of measure. The introduction is rather compact, but the interesting reader can find details and further developments in \cite{Ricardo2014} and in the references to articles of the author there in. After such introduction we describe our derivation of the weak equivalence principle from the principles of the theory.  This new expression of the concentration of measure relies on the existence of a domain where the Hamiltonian interaction and hence, the evolution operators, is $1$-Lipschitz (see ref. \cite{Ricardo2014}, {\it section} 7.1 for a discussion of this statement). In addition to other formal properties of the Hamilton-Randers dynamical models as the existence of a causal structure, diffeomorphism invariance and general covariance, the formal derivation of the weak equivalence principle from the assumptions of Hamilton-Randers theory provides support to identify the fundamental interaction of emergent quantum mechanics in the domain ${\bf D}_0$ with the gravitational interaction.

In this work we reconsider the derivation of the weak equivalence principle from the principles of Hamilton-Randers theory. This revised presentation is an improvement of a previous argument \cite{Ricardo2019a}. In particular, it is remarked that the weak equivalence principle applies irrespectively from the sources being classical or quantum. This leads to a classical, possibly fluctuating, and emergent character for gravitation. After this, it is shown that several models of non-relativistic gravitational models are compatible with Hamilton-Randers theory. This includes the Newtonian model of gravity, but also certain modified Newtonian gravity based upon a logarithmic potential model as well as convex combinations of Newtonian and the logarithmic potential models. Such models are shown to be consistent with the $1$-Lipschitz condition. This finding suggests that Hamilton-Randers theory could be a theoretical framework for certain modified Newtonian gravity models beyond general relativity.
Different from MOND \cite{Milgrom 1983a,Milgrom 1983b, Milgrom 2015} and AQUAL \cite{Bekenstein Milgrom} models, that contain an universal scale of acceleration $a_{MOND}$, the first model discussed of modified gravity contains an universal characteristic length that, as we will show, it is very different from MOND characteristic length. Unfortunately, this first model is not consistent with the barionic Tully-Fisher relation.
A second model of modified gravity is also discussed. As in MOND, it contains a critical acceleration which is universal and there is no universal length scale. This second model is also consistent with the barionic Tully-Fisher relation. However, the model explicitly violates the third law of Newtonian dynamics, also a feature from Milgrom's theory, but being consistent with $1$-Lipschitz condition of Hamilton-Randers theory, it is also consistent with the weak equivalence principle. Indeed, the theory has several key points in common with MOND formulation, albeit the assumptions and development are different as in MOND.

These examples illustrate how consistency with the Hamilton-Randers theory can help to shape viable models of gravitation from the point of view of emergent quantum mechanics, by imposing consistency with Hamilton-Randers theory as a necessary condition for modified gravity models.

\section{Hamilton-Randers theory in a nut-shell}
Emergent quantum mechanics theories are certain theoretical frameworks that aim to reproduce quantum mechanics or modifications of quantum mechanics as an effective description of a different dynamics and ontology beneath the usual quantum dynamics or usual ontology. There are several different approaches to emergent quantum mechanics \cite{Adler,Hooft2016,Smolin2012,Singh,Ricardo2014}. We consider in this study the approach developed by the author \cite{Ricardo2014}.

 Hamilton-Randers theory is based upon the following assumptions:
\begin{enumerate}
\item There is a local, deterministic dynamics beneath quantum dynamical systems. Locality is not understood in the standard spacetime framework sense, but in the context of high dimensional configuration space for the fundamental dynamical systems. Our dynamical systems are highly non-local from the point of view of spacetime framework.

\item The number of the fundamental degrees of freedom associated to a quantum system is in relation $N$ to $1$, with $N\gg 1$. $N$ is not universal, but depends on the system. For instance, a dynamical system describing a single electron or a single neutrino contains $N\gg 1$ degrees of freedom and the ration $N:1$ is a {\it measure} of the complexity of the system. Indeed, inertial mass appears as a measure of the complexity of the system too \cite{Ricardo2014,Ricardo2023}.

\item  The dynamical models describe the dynamics of the fundamental system in an abstract configuration tangent manifold $TM$. The dynamics of the fundamental degrees is such that there is an uniform or universal upper bounded speed and proper acceleration. The dynamics is formulated using geometric models by means of a Hamiltonian constructed from a time-symmetrization of Randers-type Hamiltonian \cite{Randers}. This kind of Hamiltonian are linear perturbation of the kinematical Hamiltonian. The perturbation has its grounds in an irreversible character of the fundamental evolution. In order to have a geometric interpretation of the dynamics in terms of well-defined connections, a limit on the speed and acceleration of the fundamental degrees of freedom (see \cite{Ricardo2014}, Chapter 3).

\item An alternative formulation of the fundamental dynamics makes use of Koopman-Von-Neumann theory (see \cite{Ricardo2014}, Chapter 4). This step is key to relate the fundamental dynamics with an coarse grained description of quantum mechanics.
    \begin{comentario}
    The use of Koopmann-von Neumann finds a parallel development in the work of G. 't Hooft cellular automaton interpretation of quantum mechanics \cite{Hooft2016}, but the identification of the fundamental degrees of freedom is different than in Hooft's approach. Our dynamical models have also certain similarities with the models that one found in the theory of {\it trace dynamics} developed by S. L. Adler \cite{Adler}, where sub-quantum degrees of freedom are described in the form of unitary matrices.
\end{comentario}

\item The dynamics $U_t$ of the fundamental degrees of freedom is assumed to have three different phases which are approximately cyclically repeated, defining the {\it fundamental cycles}: an expansive-ergodic phase followed by a contractive phase followed by an expansive phase.
    \begin{comentario}
    The relevant phase for the probabilistic, non-local description of quantum mechanics is the ergodic phase of the fundamental cycles. By applying ergodic concepts, a fiber averaging is interpreted as a $t$-time averaging. Since this happens in the ergodic regime,  the averaging description leads to the interpretation of the wave function as probability amplitude, and in general, to the interpretation of the quantum state as a coarse grained mathematical object associated to the deeper description.
    \end{comentario}

\item The Hilbert space of the quantum mechanical descriptions emerges by means of a {\it coarse grained averaging operation} (see \cite{Ricardo2014}, {\it Chapter 5}) from the category of Hamilton-Randers dynamical systems to the category of Hilbert spaces indexed over the $4$-dimensional spacetimes. The averaging operation is applied to several mathematical objects that appear in the Koopman-von Neumann formulation of the Hamilton-Randers dynamical systems, including the image quantum states and the inner product of the Hilbert space used for the quantum representation of the system as well as to the dynamics. It turns out that the hypothesis of emergence is equivalent to the functorial character of the averaging operation.

    \item In the theory there are two classes of time parameters, $(t,\tau)\in\,(I_1,I_2)\subset\,(\mathbb{K}_1,\mathbb{K}_2)$, where $\mathbb{K}_1,\mathbb{K}_2$ are number fields. For most of the developments of the theory both number fields $\mathbb{K}_1,\mathbb{K}_2$ are taken as isomorphic to the real numbers field $\mathbb{R}$.
     \begin{comentario}
      The two time parameters play a similar role to the slow and fast time parameters in classical dynamics. However, the time parameters $t$ and $\tau$ are not in one-to-one relation as it is usual the case in the theory of perturbations in classical dynamics \cite{Arnold}. This two-dimensional character of time is an essential element of Hamilton-Randers theory. It is of relevance for the ergodic interpretation of quantum mechanical objects as averages of functions and operators defined at the sub-quantum and in the description of non-local quantum correlations and interpretation of entanglement \cite{Ricardo2017b,Ricardo2014}.
Note also that our notion of $2$-dimensional time is also different than the one discussed in two-dimensional physics extensively by I. Bars \cite{Bars2001}. In our case, the $t$-time parameters used in macroscopic description (this includes in our language also standard quantum mechanical systems) have an emergent origin associated with counting fundamental cycles and they do not have associated an associate geometric dimension, as it is the case of Bar's $2$-time physics theory.
     \end{comentario}

    \begin{comentario}
    The cyclic nature of the fundamental interaction appeared first in the preliminary versions of our theory \cite{Ricardo2005,Ricardo2006} and also, in the interpretation of quantum mechanics advocated by D. Dolce in \cite{Dolce} and subsequent works. Despite this similarity, Dolce's theory and the theory advocated by the author are rather different. In Dolce's theory the notion of time is the usual one as it appears in current physical models, but where physical fields are subjected to cyclic boundary conditions with the periods cycle given by the associated inverse Compton frequency, while in Hamilton-Randers theory the fundamental almost-cyclic evolution is parameterized by a different time parameters than usually and the relation between the period and the mass of the system is exponential \cite{Ricardo2014,Ricardo2023}.
\end{comentario}
\end{enumerate}

The mechanism in Hamilton-Randers theory to explain the strange spookiness of quantum non-locality can be described as follows \cite{Ricardo2014,Ricardo2017b}.
Consider two spacelike separated points $x_1$ and $x_2$ of the spacetime $M_4$ and let us consider two quantum systems $\psi_1$ and $\psi_2$ such that they are associated to the points $x_1$ and $x_2$ respectively. During the evolution of the fundamental dynamics, which is parameterized by a $t$-time parameter, the sub-quantum degrees of freedom associated to $\psi_1$ and $\psi_2$ interact during the fundamental evolution, that took place in a large dimensional configuration space $TM$, not directly on the spacetime. Such interactions happen even if the points $x_1$ and $x_2$ are spacelike separated. This is because the sub-quantum degrees of freedom perform an ergodic like evolution during each fundamental cycle of the fundamental dynamics. Therefore, the evolution is filling the available phase space $T^*TM$ in a finite time in a sufficient dense way. But according to Hamilton-Randers theory, after the ergodic regime there is a natural, instantaneous contractive phase as a consequence of the geometric concentration of measure. Once the contractive dynamics takes place, the system evolves towards a reduced space. This is the reason why one can assign in our case $\psi_1$ to $x_1$ and $\psi_2$ to $x_2$. This attribution is consistent with quantum conserved quantities like momentum conservation (see Chapter 5 in \cite{Ricardo2014}. Furthermore, the filling must be consistent with conserved quantities. Thus for some systems it will appear entanglement related with the initial correlations of the systems due to interactions between the sub-quantum degrees of freedom, while for other systems they will remain independent from each other and they will not be correlations.

When the systems naturally collapse as a consequence of concentration of measure, physical attributes can be associated to the system and a measurement of them can be performed, bringing its a value that can be associated as the value of the observable for the given system.  This process happens during each individual cycle.
However, when the $t$-time dependence and dynamics is dismissed from the description of the system and only the slow $\tau$-time dynamics is considered, then the process of interaction during the ergodic regime brings the instantaneous and non-local characteristics of the  quantum description.

Being the mechanism described above highly non-local from the point of view of spacetime, it can potentially avoid the assumptions of Bell's theorem and also the assumptions underlying Kochen-Specker theorem. Such results are typical no-go theorems. As such, they based upon hypothesis and assumptions. For the cases under consideration, the assumptions of relevance are locality for the hidden variable models in the case of Bell's theorem and statistical independence in the case of Bell's theorem; contextuality avoids the Kochen-Specker argument (see for instance chapter 9 in \cite{Isham 1995} and \cite{Redhead}). In the case of Hamilton-Randers models, the non-locality is of such nature that it avoids the assumptions leading to Bell inequalities and its generalizations (see \cite{Ricardo2014}, chapter 8). The conjecture is that the degree of non-locality and because it is due to direct interactions at the level of sub-quantum degrees of freedom implies contextuality. If this conjecture is true, then the Kochen-Specker theorem is not applicable directly.
 Of course, it could still happen that similar results to Bell's theorems or Kochen-Specker theorem still apply to Hamilton-Randers theory. Therefore, further analysis are required to complete the understanding of the theory and to have a completely developed theory.

 The relevant phase of the fundamental dynamics for the topic considered in this paper is the {\it contractive phase}. Sufficient conditions for these contractive phases to happen is the occurrence of a breaking of ergodicity and that the evolution operator $U_t$  is dominated by a $1$-Lipschitz operator. Expressed quite pictorially, the $1$-Lipschitz conditions means that the intensity of the dynamics is bounded and that {\it geometry dominates}. Indeed, it can be shown that for the Hamiltonian functions considered in Hamilton-Randers theory, it is possible to decompose the Hamiltonian (an hence, the dynamics) in a $1$-Lipschitz component and in a non-Lipschitz component \cite{Ricardo2014} in certain domain of the configuration space. In the $1$-Lipschitz domain we are in position to apply the mathematical theory of concentration of measure \cite{Gromov,MilmanSchechtman2001,Talagrand}.  Concentration of measure  means that for any regular $1$-Lipschitz function in a large dimensional $mm$-Gromov space, the values that the function can take are almost constant almost everywhere, with bounds in the dispersion that typically depend exponentially with the dimension of the space. Such inequalities are geometric generalizations of the Chernoff's bounds that appear in probability theory \cite{Talagrand}. In the context of emergent quantum mechanics, concentration of measure leads from the assumption on the existence of the contractive phase to the situation where there is a natural spontaneous collapse of the wave function. This is the domain of the fundamental interaction between a contractive domain and the next expansive and ergodic domains of each fundamental cycle. This contractive phase, where the dynamics is close to a Hamiltonian identically zero is the {\it the classicality domain}, where any potentially measurable observable of the system by a physical observer has well defined values. Note that this only happens in that specific domain and that in the rest of the non-classical domain of the fundamental interaction, macroscopic observable do not need to reach well-defined values.

\section{On the weak equivalence principle in Emergent Quantum mechanics}
In this section we review our previous argument on the emergent character of the weak equivalence principle in Hamilton-Randers theory. We follow the exposition of the argument that can be found in \cite{Ricardo2014} for more details.
Let us consider a general dynamical system $\mathcal{S}$ as composed by two sub-systems $A$ and $B$. The configuration space is a tangent manifold $TM$, with $M$ being a product manifold $M=\prod^N_{k=1}\,M^k_4$ of $N\gg 1$ diffeomorphic four dimensional manifolds $\varphi_k:M^k_4\to M_4$ to a $4$-dimensional spacetime manifold model $M_4$. This construction is motivated by the assumption that the fundamental degrees of freedom are identical and  the assumption that the spacetime description of physical events should be obtained as an effective description of the underlying theory.
We consider local smooth functions defined on open sets $\mathcal{O}\subset \,T^*TM$,
\begin{align*}
 X^\mu:\mathcal{O}\to \mathcal{U}\subset\mathbb{R},\quad (u^{k_1},...,u^{k_N},p^{k_1},...,p^{k_N})\mapsto X^\mu(u^{k_1},...,u^{k_N},p^{k_1},...,p^{k_N}),\quad \mu=1,2,3,4,
\end{align*}
 the {\it macroscopic coordinates} associated to the physical system $\mathcal{S}_i$.

According to the assumptions of Hamilton-Randers theory, there is a domain of the time evolution where the dynamical evolution is such that the whole Hamiltonian is zero. This happens at the metastable points $\{t\to (2n+1)T,\,n\in\,\mathbb{Z}\}$. The metastable domain is denoted by ${\bf D}_0$. It is an open set containing the metastable points and includes in the contractive phase of the evolution. Since the fundamental dynamics is almost-cyclic, the metastable domain contain discrete domains, labeled by integers associated to different fundamental cycles.
Under the conditions of universal bounded acceleration and speed for the fundamental degrees of freedom and because the Hamiltonian is close to zero, one can show that the functions $X^\mu((2n+1)T)=X^\mu(\tau)$  are $1$-Lipschitz in $t$-time parameter on each cycle parameterized by $\tau\equiv n\in\,\mathbb{Z}$.

This construction is an example of emergent objects, in this case, emergent coordinate systems associated $\{X^\mu(\tau)\}$ along a curve parameterized by the emergent parameter $\tau$. The parameter $\tau$ takes integer values. However, if we consider that the precision is very high compared with usual intervals measured by macroscopic observers, we can make the approximation that $\tau$ is a real parameter, $\tau \in\,I\subset \mathbb{R}$.

Let us consider two subsystems $A$ and $B$ of the full system under consideration $\mathcal{S}$. The sub-systems $A$, $B$ are embedded in $\mathcal{S}$ such that
\begin{align}
\mathcal{S}=\,A\sqcup B,
 \end{align}
 for a well defined union operation $\sqcup$ for systems composed by fundamental degrees of freedom. Let us also consider local coordinate systems such that the identification
   \begin{align}
   A\equiv (u_1(\tau),...,u_{N_A}(\tau),0,...,0) \quad \textrm{and}\quad B\equiv (0,...,0,v_1(\tau),...,v_{ N_B}(\tau)),
   \label{embedding A,BtoS}
   \end{align}
   with $N=\,N_A+N_B,N_A,N_B\gg 1$ holds good.
The whole system $\mathcal{S}$ has associated represented in local coordinates in $TM$ of the form
   \begin{align*}
   \mathcal{S}\equiv (u_1(\tau),...,u_{ N_A}(\tau),v_1(\tau),...,v_{ N_B}(\tau)).
   \end{align*}
    By the action of diffeomorphisms $\varphi_k:M^k_4\to M_4$, one can consider the world lines of the fundamental degrees of freedom on $M_4$ at each constant value of $t$ modulo $2T$.  Therefore, it is reasonable to define the {\it macroscopic coordinates of the system} to be
    \begin{align}
    \tilde{X}^\mu_i(\tau(n))=\frac{1}{N}\,\lim_{t\to (2n+1)T}\sum^{N_i}_{k_i=1}\,\varphi^\mu_{k_i}(x_{k_i}(t)),\quad\,i=A,B,\mathcal{S},\,\mu=0,1,2,3,
    \label{definicionofXmu}
    \end{align}
    where here $\varphi^\mu_{k_i}$ are local coordinates on $M_4$, defined after the action of the diffeomorphism $\varphi_{k_i}$.
  We identify  $\tau(n)$ with the $\tau$-time parameter and consider it continuous in relation with macroscopic or quantum time scales. Since $\tau(n)$ is not trivially the identity, this construction is more general than above, but the significance is analogous. Then by \eqref{embedding A,BtoS} we have that
      \begin{align}
    \tilde{{X}}^\mu(\tau)=\frac{1}{N}\,\lim_{t\to (2n+1)T}\sum^{N}_{k=1}\,\varphi^\mu_{k_i}(x_{k_i}(t)),\quad\,i=A,B,\mathcal{S},
    \label{definicionofXmu2}
    \end{align}
    where here $n$ is equivalent to $\tau$ as indicator of the fundamental cycle. Now we observe that the construction of $ \tilde{{X}}^\mu(\tau):\mathbb{R}\to \mathbb{R}$ suggest to consider a coordinate patch in $M_4$ such that all these curves are represented locally in such coordinate patch by the values $ \tilde{{X}}^\mu(\tau)$. Such coordinate system $U, \tilde{{X}}^\mu(\tau))$ is assumed to be in the atlas of $M_4$.

Given the geometric structure of the theory, there is a standard measure on $T^*TM$ that we denote by $\mu_P$. This is the measure coming from the pull-back of the standard measure in $\mathbb{R}^{8N}$. With this measure, we can define the median of an arbitrary function $f:T^*TM\to \mathbb{R}$, that is the numerical value $M_f$ with the property
\begin{align*}
\mu_P (f>M_f)=1/2=\mu_P(f<M_f).
\end{align*}
 In particular, we can consider the median of the functions $ \tilde{X}^\mu_i(\tau)$, $M^\mu(\tau).$
Note that the median coordinate functions $M^\mu(\tau)$ only depend on the preparatory macroscopic conditions and on the particular cycle where it is being considered, the cycle marked by $\tau$.

\subsection{Proof of the Weak Equivalence Principle}
A function $f:{\bf T}_1\to {\bf T}_2$ between two metric spaces $({\bf T}_1,d_1)$ and $({\bf T}_2,d_2)$ is $1$-Lipschitz if
\begin{align*}
d_2(f(x),f(y))< C \,d_1(x,y),\quad 0<C<1.
\end{align*}
We can provide an heuristic interpretation of the concentration of $1$-Lipschitz functions.
Let $f:{\bf T}\to \mathbb{R}$ be a $1$-Lipschitz function on a normed topological space $({\bf T},\|,\|_{\bf T})$ locally homeomorphic to  $\mathbb{R}^N$. Then the $1$-Lipschitz condition is a form of {\it equipartition} of the  variation of $f$ originated by a variation on the point on the topological space ${\bf T}$ where the function $f$ is evaluated. When the dimension of the space ${\bf T}$ is very large compared with $1$, the significance of the $1$-Lipschitz condition is that $f$ cannot admit large {\it standard variations} caused by the corresponding standard variations on the evaluation point of {\bf T}. Otherwise, a violation of the $1$-Lipschitz condition can happen, since the large dimension provides long contributions to the variation of $f$ by adding each of the contributions coming for each individual variable on which $f$ depends on.

 For the case that we are considering, with real functions on the configuration space considered in emergent quantum mechanics, concentration of measure  \cite{Talagrand} implies the following Chernoff's type bound,
\begin{align}
\mu_P\left(\left|f-M_f\right|\,\frac{1}{\sigma_f}\,>\frac{\rho}{\rho_P}\right)\leq \, \frac{1}{2} \exp\left(-\frac{\rho^2}{2\rho^2_P}\right),
\label{concentration2}
\end{align}
 In particular, for the $U_t$ dynamics in the Lipschitz dynamical regime ${\bf D}_0$, the $\tau$-evolution of the coordinates $\tilde{X}^\mu(\mathcal{S}(\tau))$, $\tilde{X}^\mu(A(\tau))$ and $\tilde{X}^\mu(B(\tau))$ with same initial conditions differ between each other after the dynamics at $\tau$-time in such a way that the condition
   \begin{align}
  \mu_P\left(\frac{1}{\sigma_{\tilde{X}^\mu}}\,|\tilde{X}^\mu(\mathcal{S}_i(\tau))-M^\mu(\tilde{X}^\mu_i))|>\rho\right)_{t\to (2n+1)T}\sim C_1\exp \left(-\,C_2 \frac{\rho^2}{2\,\rho^2_p}\right),
\label{generalconcentrationofmeasure}
\end{align}
$\mu=1,2,3,4,\,i=A,B,\mathcal{S}$ must hold almost everywhere in the configuration space.
The constants $C_1,C_2$ are of order $1$, where $C_2$ depends on the dimension of the spacetime $M_4$. $\rho_p$ is  independent of the system $i=A,B,\mathcal{S}$. $\sigma_{\tilde{X}^\mu}$ is associated to the highest resolution possible in the measurement of ${\tilde{X}^\mu}$.
Moreover, since the $N$ identical degrees of freedom are identical, it is reasonable to assume that
  \begin{align}
\frac{\rho}{\rho_P}\sim\, N.
\label{scales}
\end{align}
That is, the contribution to the dispersion from each degree of freedom are of the same order.

Sources and test particles are different.
A test particle system is describe by a dynamical system such that the $\tau$-evolution of the center of mass coordinates $\tilde{X}^\mu$  are determined by the initial conditions $\left(\tilde{X}^\mu(\tau=0),\frac{d \tilde{X}^\mu(\tau=0)}{d\tau}|_{\tau =0}\right)$
and the external field. This is a simplified description, that do not consider back-reactions. However, we admit it just restricting the limit of applicability of the concept of test particle.

On the other hand, in Hamilton-Randers dynamics, the fundamental degrees of freedom interact during the $U_t$ dynamics. That there must be interactions between them is clearly manifest because the assumed structure of the cycles. Thus even in the case of an effective free quantum evolution, there are interactions among the sub-quantum degrees of freedom.
However, there is a natural notion of non-interacting quantum system in terms of the dynamics of the sub-quantum degrees of freedom: we say that a quantum system is not interacting with the environment if the interaction between the sub-quantum degrees of freedom of the system and the environment can be disregarded without changing the quantum and dynamical properties of the quantum system. In particular, the measure $\mu_P$ is preserved during the $U_t$ evolution and therefore, isomorphic on each cycle.

A free test particle is a test particle which is free in the above sense.

After the above propaedeutic discussion, we can state the following result:
\begin{proposicion}
Let $\mathcal{S}_i,\,i=1,2,3$ be three Hamilton-Randers dynamical systems with $N\gg 1$ associated sub-quantum degrees of freedom corresponding to free test particles. Assume that the relation \eqref{scales} holds good.
 Then the  macroscopic coordinates $\tilde{X}^\mu(\tau)$ do not depend upon the system $\mathcal{S}_i$.
 \label{proposiciononweakequivalenceprinciple}
\end{proposicion}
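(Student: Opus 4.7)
The plan is to apply the concentration inequality \eqref{generalconcentrationofmeasure} to each of the three macroscopic coordinate functions $\tilde{X}^\mu(\mathcal{S}_i(\tau))$, $i=1,2,3$, and to combine the resulting exponential bounds with the scaling \eqref{scales} to conclude that, up to a set of exponentially small $\mu_P$-measure, all three coordinates coincide with a single common value independent of $i$.

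First I would verify that the hypotheses of the concentration bound are met for each $\tilde{X}^\mu(\mathcal{S}_i(\cdot))$ in the metastable domain ${\bf D}_0$. Inside ${\bf D}_0$ the full Hamiltonian is effectively zero and the universal bounds on speed and proper acceleration are in force, so the $U_t$-flow is $1$-Lipschitz on each fundamental cycle. Because the $\mathcal{S}_i$ are free test particles in the sense defined just before the proposition, no environmental coupling enters the sub-quantum dynamics and the measure $\mu_P$ is preserved on each cycle. Together with the smoothness of the averaging operation \eqref{definicionofXmu}, this places $\tilde{X}^\mu(\mathcal{S}_i(\cdot))$ in the class of $1$-Lipschitz coordinate functions on $T^*TM$ to which the Chernoff-type bound \eqref{generalconcentrationofmeasure} applies.

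Next I would substitute $\rho/\rho_P \sim N$ from \eqref{scales} into \eqref{generalconcentrationofmeasure} for each $i=1,2,3$ and $\mu=0,1,2,3$. In the limit $t\to(2n+1)T$ this yields a bound of order $C_1\exp(-C_2 N^2/2)$, which is negligible for $N\gg 1$. Hence, outside a set of $\mu_P$-measure exponentially small in $N$, the quantity $|\tilde{X}^\mu(\mathcal{S}_i(\tau))-M^\mu(\tilde{X}^\mu_i)|$ is of order $\sigma_{\tilde{X}^\mu}$ for every $i$ and every $\mu$. The key remaining step is to show that the three medians $M^\mu(\tilde{X}^\mu_i)$ actually coincide. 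For this I would invoke the observation made just after \eqref{definicionofXmu2}, namely that the median depends only on the preparatory macroscopic conditions and on the cycle index $\tau$: the internal subsystem size $N_i$ is absorbed by the common normalization $1/N$ in \eqref{definicionofXmu}, and by the assumption that the fundamental degrees of freedom are identical their individual statistical contributions are the same. For free test particles prepared with the same macroscopic position and velocity, the preparatory data therefore agree, so $M^\mu(\tilde{X}^\mu_1)=M^\mu(\tilde{X}^\mu_2)=M^\mu(\tilde{X}^\mu_3)=:M^\mu(\tau)$, and concentration around a common median delivers the conclusion.

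The main obstacle I expect is precisely this identification of the medians across systems. Concentration of measure by itself only controls the deviation of each $\tilde{X}^\mu(\mathcal{S}_i(\tau))$ about its \emph{own} median; pinning down that those three medians literally agree requires the full force of the free test particle hypothesis (to eliminate any system-specific dynamical contribution from coupling to an environment) together with the identical fundamental degrees of freedom hypothesis (to eliminate any system-specific statistical contribution). Any residual dependence of $M^\mu$ on the internal mass or composition of $\mathcal{S}_i$ would survive the concentration argument and break the proposition, so this is the step where the Hamilton-Randers assumptions must be used at full strength.
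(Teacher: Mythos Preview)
Your proposal is correct and follows essentially the same route as the paper: verify that the coordinate functions $\tilde{X}^\mu$ are $1$-Lipschitz in the metastable domain ${\bf D}_0$, apply the concentration bound \eqref{generalconcentrationofmeasure} with \eqref{scales} to obtain an error of order $\exp(-C_2 N^2)$, and conclude that each $\tilde{X}^\mu(\mathcal{S}_i(\tau))$ is pinned to the common median $M^\mu(\tau)$. You are in fact more explicit than the paper about the step where the three medians must be identified, which the paper simply asserts by pointing to the system-independence of $M^\mu(\tau)$ established earlier; your discussion of why the free-test-particle and identical-degrees-of-freedom hypotheses are needed there is a useful sharpening rather than a departure.
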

\begin{proof}
 The coordinate functions $\tilde{X}^\mu(\tau)$ define a world line in $M_4$ and are $1$-Lipschitz in the metastable domain $t\to (2n+1)T$ of each cycle. Then we can apply the concentration of measure \eqref{generalconcentrationofmeasure} on such domains, in each cycle. Hence the observable coordinates $\{\tilde{X}^\mu\}^4_{\mu=1}$ moves following the common $M^\mu(\tau)$ coordinates with an error bounded by $\exp(-C_2 N^2)$.
 \end{proof}
 Therefore, in the subset of metastable domain for $t= (2n+1)T$ there is a strong concentration for the value the functions $\{\tilde{X}^\mu(\tau)\}^4_{\mu=1}$  around the median $\{M^\mu(\tau)\}^4_{\mu=1}$. Note that this universality is up to fixing the initial conditions of the median $M^\mu$, which is equivalent to fix the initial conditions for $\{u^\mu_k\}^{N}_{k=1}$.
 Since $M^\mu(\tau)$ does not depend on the system $i=\,1,2,\mathcal{S}$, the above mathematical consequence of concentration, that is, the fact that the coordinate in free-fall evolution are guided by the median in such independent way of the details of the system, is interpreted as the weak equivalence principle or free fall.

Since the precision of the separation of the world lines given by the Chernoff's type bounds is non-zero, for individual sub-quantum degrees of freedom considered individually can be small violations of the weak equivalence principle. To understand the degree of such  possible violations, let us consider a system composed by a single, non-interacting sub-quantum degree of freedom. If one assumes that $N=4$ degrees of freedom, associated with four momentum and spin in the form of Dirac $4$-spinor wave function dynamical system, then the bound in the discrepancy in the evolution of local coordinates is of order of $\exp (- C 16)$.
Due to its emergent character, the theory predict the abrupt breakdown of the equivalence principle at scales where the degrees of freedom are associated with the fundamental degrees of freedom. This fix the constant $C$ to be of order $C\sim 1/16$. This is just an estimation of the {\it Chernoff's distance} $C$ based upon the assumptions of the theory and the expectation that there is no concentration at the fundamental scale.
On the other hand, for ordinary quantum systems, the theory indeed assumes $\gg 4$, then we have the prediction of almost exactness of the weak equivalence principle for any quantum or classical system, where the error in the exactness is a Gaussian error with $N^2$.

 The argument above is independent of the regularity properties of the macroscopic world line or of the nature of the source. In fact, we have not restricted to the case of a classical source. If the source is a quantum system, then our argument only states that any test particle will undertake an universal world line that only depends upon the initial conditions of the center of mass and its velocity. In the case of a quantum source, it is also reasonable that the test particle must be also a quantum system. Note then that our argument implies that the quantum particle will naturally collapse along the world line parameterized by $M^\mu(\tau)$, but there is no constraint on the regularity of $M^\mu(\tau)$, which does not exclude fluctuating world lines, if no any other constraint is imposed.

\section{Newtonian gravity is $1$-Lipschitz continuous}

In order to discuss conditions under which the gravitational interaction is $1$-Lipschitz, let us consider first the newtonian gravitational force between a mass point particle  with mass $m$ by a mass point particle with mass $M$ located at the origin of coordinates,
   \begin{align}
   F_N(\vec{x})=\,-G_N\,\frac{m\, M}{r^2}, \quad \vec{x}\in\,\mathbb{R}^3
   \label{Newtonlaw}
   \end{align}
  where $G_N$ is the Newton gravitational constant and $r=|\vec{x}|$ the distance to the origin in $\mathbb{R}^3$ of  the point $\vec{x}$.
   To compare different lengths or different mechanical forces, it is useful to consider dimensionless expressions, for which we need reference scales.
In doing this comparison we adopt as length scale the Planck length and for the force scale the Planck force and use homogenous quantities for length and force. The Planck force provides a natural unit, respect to which  we can compare any other scale.
The Lipschitz condition applied to the Newtonian force is of the form
\begin{align*}
\frac{|\vec{F}_N(\vec{r}_1)-\,\vec{F}_N(\vec{r}_2)|}{F_P}=\,\frac{G_N m M}{F_P}\Big|\frac{\vec{r}_1}{r^3_1}-\frac{\vec{r}_2}{r^3_2}\Big|<\frac{|\vec{r}_1-\vec{r}_2|}{l_P},
\end{align*}
where $F_P$ is the Planck force and $\l_P$ is the Planck length.
This leads, after some work, to a sufficient condition for $1$-Lipschitz force,
\begin{align*}
G_N\,m\,M\,\frac{l_P}{F_P}\,\frac{(r_1-r_2)}{(r_1+r_2)}\frac{1}{r_1 r_2}<1,\quad r_1>r_2.
\end{align*}
This condition is re-cast in a stronger from,
\begin{align}
2\,G_N\,m\,M\,\frac{\l_P}{F_P}\,\frac{1}{r^2}<1,
\end{align}
that leads to the numerical condition
\begin{align}
17.82\times 10^{-90} |m|\,|M|\,\frac{1}{|r|^2}<1,\quad r=\min\{r_1,r_2\}
\label{sufficiente condition for Lipschitz Newtonian gravity}
\end{align}
where $|m|$ and $|M|$ are measured in kg, while $|r|$ is measured in meters. $r$ is understood as the typical scale of the action of gravity in the system, but being rigourous, it runs in a maximal interval of the form $[\l_P,+\infty[$. All known physical systems satisfy the condition \eqref{sufficiente condition for Lipschitz Newtonian gravity} in regions far enough from the divergence at $r=0$, which is reasonable since $r$ indicates the scale of the system, instead than a variable.

\section{Consistency of modified gravity with emergent quantum mechanics}

As we have argued, the weak equivalence principle can be seen as a consequence of the postulates of Hamilton-Randers theory. Specifically, as a consequence of the postulates on the existence of a large complex dynamics at the sub-quantum level and the postulate on the existence of a domain where the fundamental dynamics is $1$-Lipschitz it was showed the compatibility of the Newtonian model of gravity with the $1$-Lipschitz condition of Hamilton-Randers theory. However, Newtonian gravity is not the only case of gravitational models with domains compatible in the above sense with Hamilton-Randers theory. In this section we discuss two extension of Newtonian gravity of the type of modified gravity from this point of view.

Let us consider here the case of a logarithmic type potential $\phi =\,\bar{k}_0\, m\,M\,\log r$. Then the corresponding  force is determined by a logarithmic function of the radial distance,
 \begin{align}
 \vec{F}_1(\vec{r})=\,-\bar{k}_0 m M \frac{\vec{r}}{r^2},
 \label{modified gravity law}
 \end{align}
 where $\bar{k}_0$ is a constant not depending upon $m$ and $M$.
 Then following an argument similar to the Newtonian case, a sufficient condition for $\vec{F}_1$ being a $1$-Lipschitz force can be cast in a strong form by the condition
\begin{align}
2\,\bar{k}_0\,m\,M\,\frac{\l_P}{F_P}\,\frac{1}{r}<1.
\end{align}

Newtonian gravity and modified Newtonian gravity are not the only models of non-relativistic gravity compatible with the condition of $1$-Lipschitz force. If one consider a convex combination of the forces for a $1/r$ type potential and a logarithmic potential, the corresponding force will be of the form
\begin{align*}
\vec{F}_\lambda=\, \lambda\vec{F}_2(\vec{r})+\,(1-\lambda)\vec{F}_1 (\vec{r}),\,\quad\, \lambda\in \,[0,1] .
\end{align*}
The third law of Newton dynamics is implemented if $F_2$ is proportional to the product $m\,M$. Thus the general form for $F_2$ is of the form $\vec{F}_2=\,G m M\,\lambda\,\frac{\vec{r}}{r^3}$.
The general combination of $\vec{F}_2$ type force and  $\vec{F}_1$ is of the form
\begin{align*}
\vec{F}_\lambda(\vec{r})=\,- G m M\,\lambda\,\frac{\vec{r}}{r^3}-\,(1-\lambda)\,\bar{k}_0\,m M \frac{\vec{r}}{r^2},\,\quad\, \lambda\in \,[0,1].
\end{align*}
The condition $\lambda \in \, [0,1]$ is necessary to assure that the force is $1$-Lipschitz; otherwise, if $\lambda$ is not constrained in magnitude, the domain of consistence with the $1$-Lipschitz condition will be small. Note that $G$ is a constant, but does not necessarily coincide with the Newton constant. Therefore, from a purely formal point of view, this expression for the force can be re-casted as
\begin{align}
\vec{F}(\vec{r})=\,- G_N m M\,\frac{\vec{r}}{r^3}-\,{k}_0\,m M \frac{\vec{r}}{r^2},
\label{Modified Newton force}
\end{align}
where $G_N=\lambda \,G$ is identified with the Newton Gravitational constant, while the constant ${k}_0=\,(1-\lambda)\bar{k}_0$ is independent of the masses $m$ and $M$.

One sufficient condition for the force $\vec{F}$ be consistent with the $1$-Lipschitz condition is the requirement that
\begin{align}
2\,G_N\,m\,M\,\frac{\l_P}{F_P}\,\frac{1}{r^2}+\,2\,k_0\,m\,M\,\frac{\l_P}{F_P}\frac{1}{r}<\,1.
 \label{sufficient condition for Lipschitz in modified gravity}
\end{align}
This expression is independent of $\lambda$ and depends only on observable quantities and observable constants and parameters.

The expression \eqref{Modified Newton force} for the modified gravitational force $\vec{F}$ implies that for large distance scales, the gravitational force is proportional to $1/r$, while at short distance (in astrophysical terms), the force is the usual Newtonian force. Therefore, there is a distance scale $r_c$ where the forces are approximately equal,
\begin{align}
r_c=\,\frac{G_N}{{k}_0}.
\end{align}
At such scale of distance, a probe system with mass $m$ testing a system with mass $M$ and characteristic radius $r_{ch}$ experiences an acceleration due to the force $F_1$ given by the relation
\begin{align}
a_{ch} (M,r_{ch})=\, M \frac{G_N}{r_c\,r_{ch}}.
\label{characteristic acceleration}
\end{align}
The {\it critical radius} $r_c$ is independent of the source mass $M$, but the characteristic acceleration  $a_{ch} (M,r_{ch})$ depends on $M$ and the characteristic scale $r_{ch}$.

One can provide estimates for the constant $k_0$ by using the same fit as in MOND. Let us consider the M33 Galaxy. If we assume that for this system the $a_{ch} (M,r_{ch}) $ is equal to MOND acceleration $a_{MOND}=\,1.2\times\,10^{-10}\,m/s^{-2}$ and if we assume that M33 has a mass of $M =\,1.5\times10^{41} kg$, then $r_{ch}=r_c$ is fixed by the relation $r_{c}=\, 7.933\times 10^5\,ly$. It also determines the value of the constant $k_0 =\,6.004\times \,10^{-20} \,m^2kg^{-1} s^{-2}.$ Adopting this value, the sufficient condition \eqref{sufficient condition for Lipschitz in modified gravity} reads numerically
\begin{align}
17.82\times 10^{-90} |m|\,|M|\,\frac{1}{|r|^2}+\,2.41\times 10^{-97}\,|m|\,|M|\frac{1}{|r|}<\,1.
\label{numerical Lipschitz condition in modified gravity}
\end{align}
This condition is fulfilled at all known physical scales in regions far enough from the divergence at $r=0$.

Note that $r_c$ is of the order of the typical spiral galaxies diameter. This observation is a direct consequence of the force law \eqref{Modified Newton force}. Besides the quality of the experimental data could not be the desired to reach a direct conclusion, Fig. 48 in ref. \cite{Famaey and McGaugh} indicates a possible fail of the pure MOND starting just under the scale of the spiral galaxies.
The critical radius $r_c$ yields to a critical acceleration for a source of mass $M$ of the form
\begin{align}
a_{ch} (M,r_{ch})=\,6.3\times\,10^{-51}\,|M|\,m/s^2,
\label{critical acceleration}
\end{align}
where $|M|$ is measured in $kg$.
However, in the regime where $|\vec{F}_1|>>|\vec{F}_2|$, equating the force to the centripetal force, we have
\begin{align*}
k_0\,m\,  M\, \frac{1}{r_{ch}}=\,m\,\frac{\mathrm{v}^2}{r_{ch}},
\end{align*}
or
\begin{align*}
\mathrm{v}^4=\,k^2_0 \,M^2 .
\end{align*}
By means of the characteristic acceleration \eqref{characteristic acceleration}, this expression is re-casted as
\begin{align}
\mathrm{v}^4 =\, M G_N\,a_{ch}(M,r_{ch})\, \frac{r_{ch}}{r_c}.
\label{modified asymptotic Tully Fisher}
\end{align}
Therefore, the ratio $\frac{v^4}{M\,G_N}$ is given by the expression
\begin{align*}
\frac{v^4}{M\,G_N} =\,a_{ch}(M,r_{ch})\, \frac{r_{ch}}{r_c}=\,\frac{G_N}{r^2_c}\,M.
\end{align*}
Comparing with experimental data (see Fig. 48 in \cite{Famaey and McGaugh}), the linearity on the mass $M$ in the second side of this relation disagrees with current experimental observations.

The second class of modified gravity models that we  consider and that are compatible with the $1$-Lipschitz condition are such that the potential dominating at large  distance scales is of the form $\Phi=\,-\bar{\kappa}\,m\,\sqrt{M}\,\log r$. Then the gravitational force law  is
\begin{align}
\vec{F}'_1=\, -\bar{\kappa}\,m\,\sqrt{M}\,\frac{\vec{r}}{r^2}.
\label{alternative law of modified newton dynamics}
\end{align}
This force law coincides with deep MOND.
For large scales, this form of the force is manifestly weaker than the force $\vec{F}_1$ discussed before. Therefore, it is also consistent with the $1$-Lipschitz condition.

To shorter scales we assume the Newtonian law of force. Thus the second ansatz for the force is
\begin{align}
\vec{F}(\vec{r})=\,- G_N m M\,\frac{\vec{r}}{r^3}-\,\bar{\kappa}\,m \sqrt{M} \frac{\vec{r}}{r^2},
\label{Modified Newton force2}
\end{align}
By an analysis similar to the previous case, the critical radius where $|\vec{F}'_1 |=\,|\vec{F}_2 |$ leads to the relation
\begin{align}
r_c (M) =\,\frac{G_N}{\bar{\kappa}}\,\sqrt{M}.
\end{align}
In this case, there is no universal (independent of $M$) critical radius. In contrast, the associated critical acceleration is
\begin{align}
a_c= \,\bar{\kappa}\,m\,  \sqrt{M}\, \frac{1}{r_c} =\,\frac{\bar{\kappa}^2}{G_N}.
\end{align}
This critical acceleration is independent of $M$ in the ambit of application of the force law \eqref{Modified Newton force2}.

Note that the force law \eqref{alternative law of modified newton dynamics} is inconsistent with the third law of Newtonian dynamics. Indeed the condition for the third law leads to the constraint
\begin{align*}
- G_N m M\,\frac{\vec{r}}{r^3}-\,\bar{\kappa}\,m \sqrt{M} \frac{\vec{r}}{r^2}=\,- G_N m M\,\frac{\vec{r}}{r^3}-\,\bar{\kappa}\,\sqrt{m} {M} \frac{\vec{r}}{r^2},
\end{align*}
 which is true for all values of $r$ if and only if $M=m$. Note also that the scale where the violation of the third law starts being manifestly large is when one of the forces $\vec{F}'_1$ dominates. For $m< M$ this happens in a large amount from the critical scale $r_c(m)=\frac{G_N}{\bar{\kappa}}\,\sqrt{m}< r_c(M)$.

 For the rotating galaxy, under the condition that the force $\vec{F}'_1 $  dominates, the force must coincide with the centripetal force,
\begin{align*}
\bar{\kappa}\,m\,  \sqrt{M}\, \frac{1}{r_{ch}}=\,m\,\frac{\mathrm{v}^2}{r_{ch}}.
\end{align*}
This leads to the relation
\begin{align}
\frac{v^4}{M} =\,\bar{\kappa}^2 .
\end{align}
When this relation is interpreted as the barionic Tully-Fisher relation, then the value of the acceleration is fixed by observed dynamics of spiral galaxies. The best fit is for the value $a_c\approx 1.2\,\times 10^{-10} m/s^2 =\, a_{MOND}$ and the value of $\bar{\kappa}^2$ is fixed to be $\bar{\kappa}^2=\,G_N \,a_{MOND}\simeq \,8\times 10^{-21}\,m^4 kg^{-1} s^{-4}$.

Let us remark that the two models discussed above are not all the possible models compatible with the $1$-Lipschitz condition of Hamilton-Randers dynamics. Indeed, convex combinations of compatible models are also compatible. On the other hand, the possible laws cannot diverge too large when $r\to 0$, otherwise will not be $1$-Lipschitz with enough generality, that is, for all possible systems. Thus consistency with the $1$-Lipschitz condition at short scales limits the type of models. Therefore, the most general models of force law compatible with the $1$-Lipschitz principle of Hamilton-Randers dynamics is of the form
\begin{align}
\vec{F}=\,\int_{Mod} \rho_{M}(\varsigma)\,\vec{F}_{\varsigma},
\label{general form of compatible force}
\end{align}
where formally $Mod$ is the moduli space of individual vector forces compatible with the $1$-Lipschitz conditions and $\rho_{M}\in [0,1[$ is a density on $Mod$ such that $\int_{Mod} \,\rho_M =1$. The force law of Newton \eqref{Newtonlaw} and the modified gravity laws \eqref{Modified Newton force} and \eqref{Modified Newton force2} are particular cases.
\section{Discussion and Conclusion}
We have shown that, as a result of concentration of measure in Hamilton-Randers dynamical systems, the weak equivalence principle is understood as an emergent law of Nature. According to our reasoning, the weak equivalence principle must be an exact law, valid for systems at any quantum or classical scales. Therefore, one should not expect any observable violation of the principle with a very high precision until the system under study reach the fundamental scale describing the dynamics of individual sub-quantum systems. At such scale, the weak equivalence principle is totally violated, since the principle of concentration of measure due to the large dimensionality of the configuration space associated to a fundamental degree of freedom does not apply. This precise way on how the weak equivalence principle holds is in principle a falsifiable prediction of our theory.

A main insight explored in this work is the possibility to use the $1$-Lipschitz condition of the fundamental dynamics as a guiding principle in the search of gravity models compatible with Hamilton-Randers theory. In this context, we have argued that Newtonian gravity is a $1$-Lipschitz interaction. This is also the case for certain models of modified gravity. We have discussed two models of modified gravity. In the first one, a direct generalization of Newton theory for the case of a logarithmic potential field is discussed. For this first model, while the critical radius $r_c$ is constant, the characteristic acceleration $a_{ch} (M,r_{ch})$ depends upon the mass of the  source $M$. The model is fully consistent with the third's law of Newtonian dynamics. However, fixing the value of $r_c$ by means of fitting with respect to the  size of the well known $M33$ galaxy, we have seen that the modified gravity law \eqref{modified gravity law} is inconsistent with the experimental barionic Tully-Fisher relation.

The second model discussed in the paper, based upon the law \eqref{alternative law of modified newton dynamics}, is partially consistence with the empirical barionic Tully-Fisher relation, by fixing the constant $\bar{\kappa}$ with the observed velocity curve for spiral galaxies. A main discrepancy between theory and experiment (as in the case of MOND) is related with the apparent notorious {\it jump} in the observed barionic Tully-Fisher relation in the transition between spiral galaxies and galaxy groups. This jump, which of order of a factor $2$, remains to be explained, either by adding a different force at such scales compatible with the $1$-Lipschitz condition or, as suggested by Milgrom, by applying known physics to processes still to be understood better at such scales \cite{Milgrom2014}. It is also remarkable that the model explicitly violates the third law of Newtonian dynamics. This implies a violation of the conservation for linear momentum for systems of masses $m$ and $M$ interacting gravitationally at any scale, although it is remarkable at scales where the modified gravitation law is dominant with respect to the Newtonian law.

This second model features several common elements with MOND: first, the modification of Newton's gravitation force law is described by means of a logarithmic potential, second, the model is consistent with weak equivalence principle at the scales where the force \eqref{Modified Newton force2} is applicable, third, at first sight, the model implies a violation of momentum conservation. Nevertheless, the model discussed here and MOND are formally different. Indeed, we formulate the model by means of the force law \eqref{Modified Newton force2}, which quite different from MOND's force law \cite{Milgrom 1983a}. Thus despite the similarities, the discrepancies in the structure and formal law of force should refrain the fully identification of our second model with MOND theory.

We have also shown that only imposing the $1$-Lipschitz constraint does not fix the force law. This is apparent by the general form of the possible modified gravitational models as descried by the relation \eqref{general form of compatible force} and the particular examples of modified gravity described in this paper.

In conclusion, we have shown that Hamilton-Randers theory implies the weak equivalence principles and that it accommodates several forms of gravitational laws as emergent interactions. Significatively, Newton's gravitational force law and certain modified gravity force laws are consistent with the principles of Hamilton-Randers theory. One of the models of gravity discussed has remarkable common features with MOND, but it is still unclear up to which extends the model discussed can be identified with MOND. This question and others regarding the methods described in this paper are interested problems for further research. Specifically, it is interesting to  understand how to complete the constraints to determine completely the models compatible with Hamilton-Randers and how to extend the models discussed in this paper, for instance, by considering the models of the type described in \eqref{general form of compatible force}, to provide an accurate and natural description of the barionic Tully-Fisher relation over an ample domain including galaxy groups and galaxy clusters.

\end{document}